\documentclass{article}

\usepackage{arxiv}

\usepackage[utf8]{inputenc}
\usepackage[T1]{fontenc}
\usepackage{microtype}

\usepackage{amsmath,amssymb}
\usepackage{amsthm}
\theoremstyle{plain}
\newtheorem{theorem}{Theorem}

\newtheorem{corollary}{Corollary}
\theoremstyle{definition}

\usepackage{graphicx}
\graphicspath{{figures/}}
\usepackage{booktabs}
\usepackage{siunitx}
\usepackage{xcolor}

\usepackage[numbers,sort&compress]{natbib}
\usepackage{hyperref}
\hypersetup{colorlinks=true, linkcolor=blue!55!black,
            citecolor=blue!55!black, urlcolor=blue!55!black}
\usepackage[capitalize,noabbrev]{cleveref}

\newcommand{\tool}{\textsc{figverify}\xspace}

\providecommand{\chem}[1]{\ensuremath{\mathrm{#1}}}

\newcommand{\float}[1]{\texttt{Float#1}}

\usepackage{xspace}

\title{Automated High-Precision Extraction and Forensic Verification\\
       of Data-Bearing Vector Figures}

\author{%
  Bowen Sun\thanks{Corresponding author.} \\
  Johns Hopkins University \\
  \texttt{bsun39@jh.edu}
  \And
  Chaowei Xiao \\
  Johns Hopkins University \\
  \texttt{chaoweixiao@jhu.edu}
}

\date{Code and data:\quad\href{https://github.com/Bowen-Sun-0728/figverify}%
      {\texttt{github.com/Bowen-Sun-0728/figverify}}}

\renewcommand{\headeright}{Preprint}
\renewcommand{\undertitle}{Preprint}
\renewcommand{\shorttitle}{Automated Extraction \& Forensic Verification of Vector Figures}

\begin{document}
\maketitle

\begin{abstract}
The quantitative record of science and engineering is increasingly carried by
figures rather than text or tables, and a reader who needs the underlying numbers
must usually re-digitize them by hand: slowly, imprecisely, and with no way to prove
the result is faithful. Yet when a figure is stored as vector graphics, its data are
not approximated by the picture but encoded in it: the renderer writes each marker
and vertex at a printed precision that, for the dominant scientific toolchain,
exceeds the data's own. We turn this into three contributions, one per shortcoming of
hand digitization. First, a precision theory bounding how accurately data can be
recovered for a given renderer and export format: bit-exact \float{32} for
\texttt{matplotlib} markers, and a calibration-limited three to four significant
figures end to end. Second, an automatic extractor that decodes a figure in one pass
with no human in the loop, in place of the slow point-by-point tracing a digitizer
demands. Third, a verification theory: recovery is injective except on a
characterized, vanishingly small interval near zero; accidental agreement between
unrelated data is astronomically unlikely; and a re-rendering certificate binds the
recovered values to the markers, lines, and ticks the figure draws, not its text,
making a result non-repudiable. With no ground truth used during recovery, decoded
figures match external archives (Planck~2018 to $\sim\!10^{-9}$; the Keeling
\chem{CO_2} record to $\sim\!5\times10^{-4}$), and one decoded figure independently
reproduces a correction to the Chinchilla scaling-law confidence interval. We map the
achievable precision across common renderers and their PDF, SVG, and EPS formats.
What we deliver is certified data; the scientific significance of any particular
dataset lies outside this paper's scope, and recovered values are candidates for
human review, never accusations.
\end{abstract}

\keywords{vector graphics \and PDF \and data extraction \and reproducibility
          \and matplotlib \and document forensics \and floating point}

\section{Introduction}
\label{sec:intro}

A growing share of scientific and engineering results now reaches the
reader as a picture. A scaling law is a line through a cloud of points; a measured
spectrum is a curve with error bars; a device characteristic is a family of curves
on a datasheet. The numbers behind these pictures are often not printed in the
surrounding text, nor supplied as a table, nor deposited as data: the figure itself
is the record. A reader who wants those numbers, whether to reproduce an analysis, combine
results across papers, or check a headline claim, is forced to recover them from
the figure.

Today that recovery is done by hand, with a mouse, in tools that ask the user to
click along a curve or trace a point cloud \citep{webplotdigitizer}. This practice
has three deficiencies, and they motivate everything that follows.

\begin{description}
  \item[(D1) Low precision.] Hand-digitization recovers perhaps two or three
        significant figures, set by how steadily a human can place a cursor over a
        rasterized image. The recovered numbers inherit no stated error and no
        provenance.
  \item[(D2) Low automation, hence low throughput.] Because a person is in the
        loop for every point, the cost scales with the data. Digitizing one figure
        is tedious; digitizing the thousands of figures needed for a
        meta-analysis, an audit, or a corpus study is infeasible.
  \item[(D3) No verification.] Even a carefully digitized dataset cannot be proven
        to be the numbers the author actually plotted. Hand digitization offers no
        certificate of its own fidelity, and so no non-repudiable result that a
        third party is compelled to accept.
\end{description}

The key observation of this paper is that, for an important class of figures, none
of these deficiencies is fundamental. When a figure is stored as vector graphics
(the default for the \LaTeX-authored physics, astronomy, and computer-science
literature, and for many engineering datasheets), the document does not contain an
image of the plot. It contains the drawing instructions: the exact device
coordinates of every marker, the vertices of every polyline, the geometry of every
axis tick. The renderer wrote those coordinates with a fixed number of decimal places that,
for the dominant scientific toolchain, resolves more finely than the data's own
numerical type \citep{hunter2007matplotlib}. The data are therefore not approximated by the
figure; they are encoded in it, and can be decoded automatically, to high
precision, and with a certificate.

Extracting data from vector charts is not itself new
\citep{webplotdigitizer,chartdetective,svgdigitizer}, and the observation that
vector figures carry more numerical information than raster ones has been made
before, including in a notable manual forensic reanalysis of a single disputed
result \citep{hamlin2022}. Our contribution is not the possibility of extraction
but its formalization, automation, certification, and characterization. Concretely,
mirroring the three deficiencies above:

\begin{itemize}
  \item \textbf{A precision theory (addresses D1; \cref{sec:precision}).} We show
        that the achievable accuracy is governed by two separable steps, the
        renderer's coordinate quantization and the reader's axis calibration, and
        we bound each. For \texttt{matplotlib}'s PDF backend the quantization step
        returns bit-exact \float{32}; calibration is then the true bottleneck, at
        three to four significant figures, which already exceeds what most papers
        report for the same quantities.
  \item \textbf{An automatic extractor (addresses D2; \cref{sec:impl}).} A
        white-box parser paired with a black-box re-render decodes a figure in a
        single automatic pass, with no human in the loop, in place of the slow
        point-by-point tracing a digitizer requires; because nothing is interactive,
        the cost no longer grows with the number of points, and the same step runs
        over many figures as easily as over one.
  \item \textbf{A verification theory (addresses D3; \cref{sec:collision}).} The
        core is a proof that the extraction itself is correct. Recovery is injective
        except on a characterized, vanishingly small interval near zero, and
        accidental agreement between unrelated data is astronomically unlikely, so
        when the recovered values re-render to the same markers, lines, and
        ticks the figure draws they are certified to be the data that produced the
        figure: a non-repudiable result. That the same certificate also exposes a figure
        altered after rendering is a secondary benefit.
\end{itemize}

We support the theory with results that need no trust. With no ground truth used
during recovery, decoded figures agree with independent public archives
(Planck~2018 to $\sim\!10^{-9}$ and the Keeling \chem{CO_2} record to
$\sim\!5\times10^{-4}$, \cref{sec:results}), and we measure the achievable
precision across common renderers and their PDF, SVG, and EPS export formats
(\cref{sec:renderers}). As a concrete and deliberately safe application, we decode
the figure behind the Chinchilla scaling law and independently reproduce a
published correction to its confidence interval (\cref{sec:application}). We
release \tool, an open-source verifier. Because decoding figures at this precision
could power a range of data-integrity checks, from spotting fabricated or altered
points to detecting reuse across figures, we are explicit about scope: this paper
concerns only the extraction and its verification, and passes no judgement on the
integrity of any particular figure or paper. As detailed in \cref{sec:ethics},
recovered values are candidates for human review, never automated accusations.

\section{Related work}
\label{sec:related}

We position our work against four lines of prior art. In each case the prior
method is real and useful; we state what it does and then the specific advantage
our approach adds, indexed to the deficiencies of \cref{sec:intro}.

\paragraph{Interactive raster digitizers.}
Tools such as WebPlotDigitizer \citep{webplotdigitizer} let a user recover
approximate data by clicking on a rasterized plot. They are general (they work on
any image) and widely used. But they are semi-automatic (a human places points)
and pixel-limited (precision is set by image resolution and hand steadiness), and
they produce no certificate. Our advantage is fully automatic recovery at the
renderer's native precision, with a re-render certificate, directly answering
D1--D3, at the cost of requiring a vector source.

\paragraph{Vector-chart extraction.}
ChartDetective \citep{chartdetective} and svgdigitizer \citep{svgdigitizer}
recover data from vector charts (PDF or SVG), and are far more accurate than raster
digitizers because they read coordinates rather than pixels. They remain, however,
interactive (the user identifies series, axes, and mappings) and they treat
extraction as an engineering task rather than a quantity with a provable ceiling.
Our advantage is a precision theory that says how many bits are recoverable for a
given renderer and format (\cref{sec:precision,sec:renderers}), automation of
series and axis identification, and a certificate on the output. We also show that
the common PDF$\to$SVG conversion route used by such pipelines discards roughly ten
to fourteen bits relative to parsing the PDF directly (\cref{sec:renderers}).

\paragraph{Raster image forensics.}
A large literature detects manipulated bitmap figures, the duplicated or spliced
blots and micrographs typified by large-scale screening for inappropriate image
reuse \citep{bik2016}. This work is essential where the figure is a photograph, but
it operates on pixels and does not recover the plotted numbers, so it cannot check
whether a curve is consistent with a claim. Our advantage is that for vector plots
we recover the numerical data and bound the probability that an apparent match is
coincidental, giving a quantitative, data-level forensic primitive rather than a
pixel-level one.

\paragraph{Manual forensic reanalysis.}
Most directly related, \citet{hamlin2022} manually digitized a disputed
condensed-matter figure (via a PDF$\to$SVG centroid extraction) and used the
recovered digital precision to argue about data provenance, demonstrating, by hand
and on a single case, that vector figures carry forensically meaningful
information. That work validates its recovered numbers only indirectly, though, by
comparison with external ground truth and by concordance across a few related
figures and renderings; it offers no direct verification mechanism, no certificate
that an extraction is correct on its own terms. None of the prior art above does. A
central original contribution of this paper is exactly such a mechanism: a
re-rendering certificate that ties the recovered values to the markers, lines, and
ticks the figure itself draws and so proves an extraction faithful without recourse
to any outside reference.
Around it we add the formal recoverability theory, the no-collision guarantee, the
ground-truth validation against external archives, and the map across renderers and
formats that together turn a one-off manual analysis into a general, automatic
method.

In short, neither high-precision vector extraction nor the idea that vector figures
are forensically informative is new. What we add is the theory that bounds the
precision, the automation that scales it, the measurements that delimit where it
works, and, above all, the verification certificate that makes a recovered result
provable rather than merely plausible.

\section{Preliminaries}
\label{sec:prelim}

We first describe, in renderer-agnostic terms, the pipeline that turns numbers
into a vector figure, since it is this pipeline, not any one library, that
determines what can be recovered. \texttt{matplotlib} appears only as a concrete
instance.

\paragraph{From data to device coordinates.}
A plotting library places a data value on the page by an affine map. For a
single axis, a data value $v$ becomes a device coordinate
\begin{equation}
  u \;=\; a\,v + b,
  \label{eq:affine}
\end{equation}
where the scale $a$ and offset $b$ are fixed by the axis limits and the physical
size of the plotting area; a log axis applies \cref{eq:affine} to $\log v$. The
pair $(a,b)$ is shared by every datum drawn against that axis, and the axis ticks
are themselves drawn through the same map, a fact we will use to recover $(a,b)$
without any external information.

\paragraph{From device coordinates to stored text.}
The vector container (PDF, SVG, or EPS) is a program: it records drawing operators
and their operands as text. A point at device coordinate $u$ is therefore written
as a decimal numeral with a fixed number of fractional digits set by the backend.
Writing $d$ fractional digits snaps $u$ to a grid of spacing
\begin{equation}
  \Delta_{\mathrm s} \;=\; 10^{-d}
  \label{eq:quantum}
\end{equation}
in the coordinate unit; we call $\Delta_{\mathrm s}$ the storage quantum.
Crucially, the data value is not rounded to ``plot resolution''; it is rounded only
at this final printing step, which is typically far finer than the eye, and (as we
will see) often finer than the data's own numerical type.

\paragraph{Markers, polylines, and series.}
Two drawing primitives carry essentially all plotted data. A marker (a scatter
point) is a small glyph placed at a point; a polyline (a line plot) is a connected
sequence of vertices. Backends commonly emit the points of one series as an
absolute first placement followed by incremental offsets, so recovering positions
requires accumulating those increments per series, a detail that matters for
correctness (\cref{sec:impl}) but not for the precision argument.

\paragraph{The matplotlib instance.}
In \texttt{matplotlib}'s PDF backend, marker centres are placed through a
coordinate-system transform written with ten fractional digits
($d=10$, $\Delta_{\mathrm s}=10^{-10}$), while polyline vertices are written with
six ($d=6$, $\Delta_{\mathrm s}=10^{-6}$). These two numbers, together with the
affine map of \cref{eq:affine}, are all the precision theory of
\cref{sec:precision} needs. Other renderers and export formats choose different
$d$, and we tabulate the consequences in \cref{sec:renderers}.

\paragraph{What ``recovery'' means.}
Given the stored coordinates and an estimate $(\hat a,\hat b)$ of the affine map,
we invert \cref{eq:affine} to obtain $\hat v=(\tilde u-\hat b)/\hat a$. The rest of
the paper asks two questions about $\hat v$: how close it is to $v$
(\cref{sec:precision}), and when a match between recovered values can be trusted as
non-coincidental (\cref{sec:collision}).

\section{Extraction precision}
\label{sec:precision}

We now bound the accuracy of the recovered value $\hat v$. Two quantization steps
stack, and the coarser one dominates; the surprising empirical fact is which one
it is.

\subsection{The two-step ceiling}

We first fix the two steps precisely. A datum $v$ is placed at $u=av+b$
(\cref{eq:affine}) and stored by rounding to the grid \cref{eq:quantum}, so the
stored coordinate is $\tilde u=u+\eta$ with
$\lvert\eta\rvert\le\tfrac12\Delta_{\mathrm s}$. The affine map is not known exactly
but recovered from the axis ticks as $\hat a=a(1+\delta_a)$ and $\hat b=b+\Delta_b$,
with the calibration error collected into a single relative quantity $\varepsilon$
such that $\lvert\delta_a\rvert\le\varepsilon$ and
$\lvert\Delta_b\rvert\le\varepsilon\lvert b\rvert$. Recovery inverts the estimated
map, $\hat v=(\tilde u-\hat b)/\hat a$.

\begin{theorem}[Extraction-precision ceiling]
\label{thm:ceiling}
With the setup above, the recovered value obeys
\begin{equation}
  \lvert \hat v - v\rvert \;\le\;
  \underbrace{\frac{\Delta_{\mathrm s}}{2\lvert a\rvert}}_{\text{storage step}}
  \;+\;
  \underbrace{\varepsilon\,\bigl(\lvert v\rvert + \lvert b/a\rvert\bigr)}_{\text{calibration step}}
  \;+\; O(\varepsilon^2),
  \label{eq:ceiling}
\end{equation}
where the storage step is fixed by the renderer and format, and the calibration
step by what the reader can reconstruct of the axes.
\end{theorem}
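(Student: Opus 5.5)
We prove the bound by direct substitution and a first-order expansion in $\varepsilon$. Write $\hat v-v$ in terms of the three perturbations $\eta$, $\delta_a$, $\Delta_b$. Since $\tilde u=av+b+\eta$, $\hat b=b+\Delta_b$ and $\hat a=a(1+\delta_a)$, we have
\begin{equation*}
  \hat v \;=\; \frac{\tilde u-\hat b}{\hat a} \;=\; \frac{av+\eta-\Delta_b}{a(1+\delta_a)}
  \;=\; \frac{v+\eta/a-\Delta_b/a}{1+\delta_a}.
\end{equation*}
This exact identity is the only algebra needed. Everything else is bounding it term by term.

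Next, expand $(1+\delta_a)^{-1}=1-\delta_a+O(\delta_a^2)$, which is valid because $\lvert\delta_a\rvert\le\varepsilon$ is small; for definiteness assume $\varepsilon\le\tfrac12$, so the remainder is at most $2\delta_a^2$. This gives
\begin{equation*}
  \hat v-v \;=\; \frac{\eta}{a} \;-\; \frac{\Delta_b}{a} \;-\; \delta_a v \;-\; \delta_a\frac{\eta}{a} \;+\; \delta_a\frac{\Delta_b}{a} \;+\; O(\delta_a^2)\,\Bigl(v+\frac{\eta-\Delta_b}{a}\Bigr).
\end{equation*}
Apply the triangle inequality to each group.
\begin{itemize}
  \item The storage term satisfies $\lvert\eta/a\rvert\le\Delta_{\mathrm s}/(2\lvert a\rvert)$.
  \item The two first-order calibration terms satisfy $\lvert\delta_a v\rvert+\lvert\Delta_b/a\rvert\le\varepsilon\lvert v\rvert+\varepsilon\lvert b/a\rvert$. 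Their sum is exactly the calibration step in \cref{eq:ceiling}.
  \item The terms $\delta_a\Delta_b/a$ and $O(\delta_a^2)$ are $O(\varepsilon^2)$ with coefficients depending on $v$ and $b/a$.
\end{itemize}

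The step needing care is the cross term $\delta_a\eta/a$. It is of order $\varepsilon\,\Delta_{\mathrm s}/\lvert a\rvert$, which is neither purely storage nor $O(\varepsilon^2)$ as written. I would handle it by bounding the storage contribution exactly rather than expanding it: $\lvert\eta/(a(1+\delta_a))\rvert\le\Delta_{\mathrm s}/(2\lvert a\rvert(1-\varepsilon))$. That shows the cross term is a multiplicative $(1+O(\varepsilon))$ correction to the storage step. In the regime of interest $\Delta_{\mathrm s}/\lvert a\rvert$ is itself tiny, at most comparable to $\varepsilon\lvert v\rvert$, so the product is second order in the small quantities. I would state this explicitly as the interpretation of $O(\varepsilon^2)$: a remainder that is second order jointly in $\varepsilon$ and $\Delta_{\mathrm s}/\lvert a\rvert$, with constants depending only on $\lvert v\rvert$ and $\lvert b/a\rvert$.

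The argument ends by noting two things. The storage step depends only on $d$ (through \cref{eq:quantum}) and the scale $a$. The calibration step depends only on $\varepsilon$ and on the geometry $(v,b/a)$. This gives the separation claimed in the statement.
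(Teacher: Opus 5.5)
Your proof is correct and is essentially the paper's argument: substitute to get $\hat v=(v+\eta/a-\Delta_b/a)(1+\delta_a)^{-1}$, expand $(1+\delta_a)^{-1}$ to first order, and apply the triangle inequality term by term. Your explicit treatment of the cross term $\delta_a\eta/a$, reading the remainder as second order jointly in $\varepsilon$ and $\Delta_{\mathrm s}/\lvert a\rvert$, is more careful than the paper, which simply absorbs $\eta\delta_a$ into $O(\varepsilon^2)$ on the grounds that $\eta$ is already at the storage scale.
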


We prove this in \cref{app:precision}. The content of \cref{thm:ceiling} is that the
achievable error is the sum of the two terms, dominated by whichever is larger, and
that the two have very different sizes in practice.

\subsection{A worked matplotlib example}

Consider a scatter plot drawn by \texttt{matplotlib}'s PDF backend on a panel a
few inches wide. The marker storage quantum is
$\Delta_{\mathrm s}=10^{-10}$ in the transformed unit, and the scale $a$ maps the
axis range onto that panel. Plugging real values into the storage term of
\cref{eq:ceiling}, the half-ulp of storage lands below the spacing of the
\float{32} grid over the entire plotted range: inverting the stored coordinate
returns the very same \float{32} number the library started from, bit for bit, all
$24$ bits of significand \citep{ieee754}, not an approximation of them. We confirm this directly in
\cref{sec:results}: against data we control, marker recovery is exact to the last
bit; against Planck's released spectrum it agrees to $\sim\!10^{-9}$, i.e.\ to the
\float{32} floor.

Polylines are coarser. With $\Delta_{\mathrm s}=10^{-6}$ the storage term caps a
polyline vertex at roughly six to seven significant figures, still far beyond what
any hand digitizer reaches, but no longer bit-exact.

\subsection{Calibration is the real bottleneck}

The second term of \cref{eq:ceiling} is the one that bites. In practice the reader
does not know $(a,b)$ exactly; they are recovered from the tick marks, whose device
positions are themselves only stored coordinates, and whose data values are read
from the tick labels, whose printed precision sets the true limit
(\cref{app:calibration}). This injects a relative error $\varepsilon$ of order
$10^{-3}$ to $10^{-4}$ of the axis range. Since
$\varepsilon \gg \Delta_{\mathrm s}/\lvert a\rvert$ for any real figure, the
calibration term dominates and

\begin{equation*}
  \lvert \hat v - v\rvert \;\approx\; \varepsilon\,(\lvert v\rvert+\lvert b/a\rvert)
  \;\sim\; 10^{-3}\text{--}10^{-4}\ \text{of the axis range,}
\end{equation*}

i.e.\ three to four significant figures end to end. This is the number a user of
the method actually gets. Two remarks make it meaningful. First, it is not a
limitation of the figure: the data are stored far more precisely, and better
calibration (more ticks, a known axis transform, or author-supplied limits) moves
the ceiling back toward the storage floor. Second, three to four significant
figures already exceeds what most papers report for the same quantities, so even
the calibration-limited result is often more precise than the text it accompanies.
We return to this in \cref{sec:results}, where the Keeling polyline, in the
calibration-limited regime, reproduces the reference record to
$\sim\!5\times10^{-4}$, exactly the regime \cref{eq:ceiling} predicts.

The storage term, finally, is renderer-specific: it is what separates a backend
that resolves the \float{32} grid from one that does not. We quantify that across
the ecosystem in \cref{sec:renderers}.

\section{The no-collision property}
\label{sec:collision}

This section is what lets the method's forensic conclusions be stated rigorously
rather than suggestively. Recovering a number accurately is not enough to use it as
evidence; two further things must hold. First, the recovery must be unambiguous: two
different data values must never end up stored as the same coordinate, or we could
not say which one the author plotted. Second, an agreement must be meaningful: if a
recovered dataset coincides with another figure or a published reference, that
coincidence has to be far too unlikely to be chance. We call the first property
injectivity and the second a coincidence bound, and together they are what we mean by
``no collision''. Without them a forensic match is only a plausible story; with them
it becomes evidence. The two subsections make each precise.

\subsection{Injective recovery, and the interval where it fails}

\begin{theorem}[Injectivity and the collision interval]
\label{thm:inj}
Two distinct data values are stored as distinct coordinates whenever their device
separation exceeds the storage quantum,
$\lvert a\rvert\,\lvert v_1-v_2\rvert > \Delta_{\mathrm s}$, and may collide only when
it does not. Consequently recovery
is injective on any data whose neighbouring values are separated by more than
$\Delta_{\mathrm s}/\lvert a\rvert$. If the data are \float{32} numbers, the
representable grid at magnitude $\lvert v\rvert$ has spacing
$\mathrm{ulp}(v)=2^{\lfloor\log_2\lvert v\rvert\rfloor-23}$, so individual values
are resolvable when $\lvert v\rvert > x^\dagger$, where
\begin{equation}
  x^\dagger \;\approx\; \frac{\Delta_{\mathrm s}}{\lvert a\rvert}\,2^{24}.
  \label{eq:xdagger}
\end{equation}
Below $x^\dagger$, a dyadic interval $[-x^\dagger,x^\dagger]$ around zero, the
floating-point grid is finer than the storage grid and neighbouring values
collide.
\end{theorem}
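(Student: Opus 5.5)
We treat storage as the rounding map $Q(u)=\Delta_{\mathrm s}\,\mathrm{round}(u/\Delta_{\mathrm s})$ composed with the affine placement $u=av+b$ of \cref{eq:affine}, and we argue in the device coordinate. The first claim is a pigeonhole statement about the grid, and the threshold $x^\dagger$ follows by comparing two grid spacings.

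\textbf{Step 1 (separation implies distinct storage).} For two values we have $\tilde u_i=u_i+\eta_i$ with $\lvert\eta_i\rvert\le\tfrac12\Delta_{\mathrm s}$. If $\tilde u_1=\tilde u_2$, the triangle inequality gives $\lvert u_1-u_2\rvert\le\lvert\eta_1\rvert+\lvert\eta_2\rvert\le\Delta_{\mathrm s}$. Contrapositively, $\lvert a\rvert\lvert v_1-v_2\rvert>\Delta_{\mathrm s}$ forces $\tilde u_1\neq\tilde u_2$.

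\textbf{Step 2 (collisions possible otherwise).} If $\lvert a\rvert\lvert v_1-v_2\rvert\le\Delta_{\mathrm s}$, the two points are not guaranteed to be separated. Two device points at distance at most $\Delta_{\mathrm s}$ can fall in the same rounding cell $[k\Delta_{\mathrm s}-\tfrac12\Delta_{\mathrm s},\,k\Delta_{\mathrm s}+\tfrac12\Delta_{\mathrm s})$ for a suitable offset $b$. The word ``may'' in the statement is therefore existential, and exhibiting such a placement suffices. Injectivity on a data set whose neighbours are more than $\Delta_{\mathrm s}/\lvert a\rvert$ apart then follows by applying Step 1 to consecutive sorted values, since non-neighbouring pairs are separated even further.

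\textbf{Step 3 (the \float{32} threshold).} Adjacent \float{32} values at magnitude $\lvert v\rvert$ differ by $\mathrm{ulp}(v)=2^{\lfloor\log_2\lvert v\rvert\rfloor-23}$. This spacing lies between $2^{-24}\lvert v\rvert$ and $2^{-23}\lvert v\rvert$. By Step 1, neighbouring floats are resolved when $\mathrm{ulp}(v)>\Delta_{\mathrm s}/\lvert a\rvert$. Using the lower bound $\mathrm{ulp}(v)\ge 2^{-24}\lvert v\rvert$, this holds as soon as $\lvert v\rvert>2^{24}\Delta_{\mathrm s}/\lvert a\rvert$, which is \cref{eq:xdagger}.

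Conversely, when $\lvert v\rvert$ is below roughly $2^{23}\Delta_{\mathrm s}/\lvert a\rvert$, the upper bound gives $\mathrm{ulp}(v)<\Delta_{\mathrm s}/\lvert a\rvert$. Then several consecutive floats land in the same device cell of width $\Delta_{\mathrm s}$, and by pigeonhole at least two of them collide. Because $\mathrm{ulp}$ is monotone in $\lvert v\rvert$ and constant on binades, the collision region is a symmetric interval whose endpoint sits at a power of two. That is the ``dyadic interval'' $[-x^\dagger,x^\dagger]$ of the statement.

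\textbf{Main obstacle.} The delicate step is the factor-of-two slack hidden in ``$\approx$''. Within the single binade between $2^{23}\Delta_{\mathrm s}/\lvert a\rvert$ and $2^{24}\Delta_{\mathrm s}/\lvert a\rvert$, whether collisions occur depends on where the binade boundary falls. I would handle this by defining $x^\dagger$ exactly as the smallest power of two $2^k$ with $2^{k-23}>\Delta_{\mathrm s}/\lvert a\rvert$. That choice makes the statement precise and stays within a factor of two of \cref{eq:xdagger}.

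One further point needs attention. Step 2's ``may collide'' is worst-case over the offset $b$, whereas for a fixed figure a collision is only possible, not certain. I would state this explicitly rather than claim every sub-threshold pair collides.
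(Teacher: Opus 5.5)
Your proposal is correct and follows the paper's route: the rounding-cell argument that separation above $\Delta_{\mathrm s}$ forces distinct stored coordinates (you make it explicit via $\lvert u_1-u_2\rvert=\lvert\eta_2-\eta_1\rvert\le\Delta_{\mathrm s}$), the existential converse, and a binade-by-binade comparison of $\lvert a\rvert\,\mathrm{ulp}(v)$ with $\Delta_{\mathrm s}$ that yields a dyadic threshold. The one difference is the exact constant: with $\delta=\Delta_{\mathrm s}/\lvert a\rvert$, your $x^\dagger=2^{\lfloor\log_2\delta\rfloor+24}$ is the sharp threshold for that criterion, whereas the paper takes $x^\dagger=2^{\lfloor\log_2(2\delta)\rfloor+24}$, which is twice as large (a factor it attributes to round-to-nearest), and both fall within the factor-of-two slack that the $\approx$ in the statement allows.
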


The proof and the exact dyadic constant are in \cref{app:collision}.
\Cref{fig:collision} demonstrates the injectivity condition directly on real
renders: as markers are packed below a device's storage quantum, distinct inputs
begin to land on one stored coordinate, and the onset is set by that device's
measured quantum.

\begin{figure}[t]
  \centering
  \includegraphics[width=\linewidth]{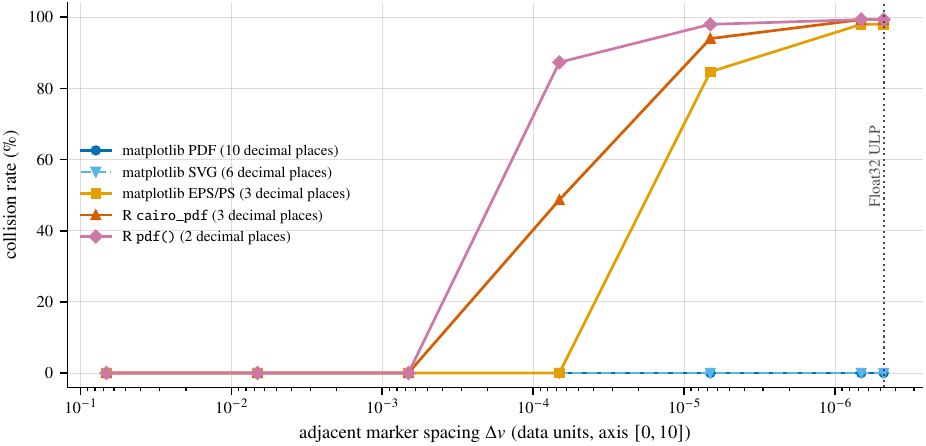}
  \caption{Injectivity, measured. We pack 150 distinct values into a shrinking
  window on a fixed axis, render them with each renderer, and re-extract; a
  collision is two distinct inputs that land on one stored device coordinate
  (\cref{thm:inj}). As the adjacent spacing $\Delta v$ falls below a renderer's
  storage quantum, collisions appear. matplotlib's PDF and SVG outputs (ten and six
  decimal places) resolve every marker down to the Float32 ULP with no collisions,
  and coincide along the bottom axis, whereas matplotlib's EPS/PS output (three
  decimals) and both R devices collapse, keeping only a handful of distinct values
  out of 150 at tight spacing. The same library loses most of its precision through
  the PostScript backend, so format matters as much as renderer. This is the
  renderer contrast of \cref{sec:renderers} at the level of individual points.}
  \label{fig:collision}
\end{figure}

\paragraph{The failure interval is rare, and rarity is renderer-specific.}
The width of $[-x^\dagger,x^\dagger]$ scales with the storage quantum, so the
renderer decides how often it matters. For matplotlib PDF, \cref{eq:xdagger} gives
$x^\dagger \approx 4\times10^{-6}$ of the axis range, so a datum must sit within
about $0.0004\%$ of zero to be at risk, which essentially never happens for data
plotted on a sensibly chosen axis. For PGFPlots the interval reaches \float{32}
only near the top of the scale. For R's PDF devices, by contrast, $x^\dagger$
exceeds the entire axis range, so these devices never resolve the \float{32} grid
and collisions are pervasive at that level, exactly as \cref{fig:collision} shows.
The point is that for the backend on which bit-exact recovery is claimed, the
collision region is a vanishing sliver, not a common occurrence.

\subsection{A match is evidence, not coincidence}

The second sense of ``no collision'' is statistical, and it is what gives the
method forensic force. Suppose two figures, or a figure and a reference dataset,
share $n$ recovered points. If those points are genuinely unrelated, how likely is
the agreement to be accidental?

\begin{theorem}[Coincidence bound]
\label{thm:coin}
Suppose each matched coordinate carries $b$ effective bits: after degenerate,
low-cardinality, and structured values are excluded, it is modelled, against an
adversary seeking a chance match, as no more concentrated than uniform over its
$2^{b}$ resolvable cells. Let two point sets of size $n$ be independent, with the two
coordinates of a point independent, and call two points matched when both
coordinates agree to within one cell. Then the probability that all $n$ points of
one set match the other is at most
\begin{equation}
  \Pr[\text{accidental match}] \;\le\; 2^{-2 n b}.
  \label{eq:coin}
\end{equation}
\end{theorem}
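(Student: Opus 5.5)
The plan is to reduce the event ``all $n$ points match'' to a product of per-coordinate agreement events, bound each factor by $2^{-b}$ using the no-more-concentrated-than-uniform hypothesis, and multiply. Fix the matching as a pairing: the $i$-th point $(x_i,y_i)$ of the first set is compared with a designated point $(x'_i,y'_i)$ of the second, for instance the one the extractor proposes as its partner. The statement as written is for a fixed pairing. If one wants to allow any bijection, a union bound over the $n!$ candidate pairings costs a factor $n!\le 2^{n\log_2 n}$, which is negligible beside $2^{-2nb}$ when $b\gg\log_2 n$. I would state this as a remark rather than fold it into \cref{eq:coin}.

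\emph{Step 1: one coordinate.} Condition on the value $x'_i$ of the second set. By independence of the two sets, $x_i$ keeps its own distribution. The event ``$x_i$ agrees with $x'_i$ to within one cell'' means $x_i$ lands in a set of cells determined by $x'_i$ alone. The uniformity hypothesis says each of the $2^b$ resolvable cells carries probability at most $2^{-b}$, so the probability of landing in one target cell is at most $2^{-b}$. Averaging over $x'_i$ keeps the bound. The same holds for $y_i$.

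\emph{Step 2: one point, then all points.} The two coordinates of a point are independent by hypothesis. Conditioning on $(x'_i,y'_i)$ therefore makes the joint match probability of point $i$ at most $2^{-b}\cdot 2^{-b}=2^{-2b}$. Across points, I would use the chain rule, conditioning point $i$'s match on the outcomes for points $1,\dots,i-1$. This needs the points within a set to be independent, or at least each new point to remain no more concentrated than uniform given the earlier ones. The exclusion of ``structured values'' in the hypothesis is exactly what licenses this conditional version. Multiplying the $n$ factors gives $2^{-2nb}$.

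The main obstacle is the meaning of ``within one cell.'' If agreement means lying in the same cell, the per-coordinate factor is exactly at most $2^{-b}$. If it means an adjacent cell is also acceptable, then up to three cells qualify and the factor becomes $3\cdot 2^{-b}$. I would handle this by defining $b$ as the effective bits \emph{at the matching tolerance}, so that a ``cell'' is the tolerance window itself and agreement is membership in the partner's window. This is consistent with the statement's phrase ``resolvable cells,'' and it makes the bound exact as stated. I would also note explicitly that the $O(\varepsilon)$ calibration error of \cref{thm:ceiling} only enlarges the effective cell and lowers $b$, so using the calibration-limited $b$ keeps the bound conservative.
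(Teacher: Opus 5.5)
Your proposal is correct and follows the paper's own argument. The steps match: each coordinate agrees by chance with probability at most $2^{-b}$ under the no-more-concentrated-than-uniform hypothesis, the two independent coordinates give $2^{-2b}$ per point, and independence across the $n$ points gives $2^{-2nb}$. The things you make explicit (a fixed pairing, the same-cell reading of ``within one cell,'' and independence or conditional min-entropy across points within a set) are what the paper's proof uses silently when it asserts that ``the $n$ points are independent,'' so they sharpen the argument rather than change it.
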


For modest values this is astronomically small: $n=20$ points at $b=10$ effective
bits gives $2^{-400}\approx 10^{-120}$. A multi-point agreement is therefore not
something that happens by chance; it is evidence that the two figures share a
common data origin, or that a recovered figure faithfully reproduces the
reference. We use \cref{eq:coin} in two ways: to certify a recovery (a decoded
figure whose re-render reproduces the original's markers, lines, and ticks agrees on every point, an event
of negligible coincidence probability) and to flag genuine data reuse across
figures.

\paragraph{Why all points, not one.}
The force of \cref{eq:coin} comes from requiring all $n$ points to agree at once, not
from any single coincidence. A lone match carries little weight: two unrelated series
land on the same value at one point with probability about $2^{-2b}$, which for the
$b=10$ above is roughly $10^{-6}$, small but not rare once a whole corpus is scanned.
Asking instead whether some one of the $n$ points happens to coincide is the wrong
test: that probability grows with $n$, to about $n\,2^{-2b}$, and is exactly the rate
of accidental single matches the method must rise above. Requiring simultaneous
agreement on the entire set behaves the opposite way: every further point multiplies
the improbability, so $n$ points give $2^{-2nb}$ and the evidence compounds. A single
coincidence is noise; agreement on all $n$ points together is the signal, which is why
both the certificate and the reuse test demand the whole set, never a part of it.

\paragraph{The entropy gate.}
Bound \cref{eq:coin} is only as strong as $b$. Integer-valued, low-cardinality, or
otherwise structured sequences carry few effective bits, and unrelated figures can
share them by chance; two plots both running $1,2,\dots,10$ on the $x$-axis is not
evidence of anything. We therefore gate the forensic test to high-entropy
sequences, estimating $b$ per series and excluding the degenerate cases before any
claim is made. This gate is what keeps a large bound from becoming a false
accusation, and it is applied uniformly in \cref{sec:results}.

\section{Implementation}
\label{sec:impl}

Our verifier, \tool, is a white-box parse paired with a black-box re-render. We
describe the steps that bear on precision and on the certificate, and summarise
axis calibration here while deferring its details to \cref{app:calibration}; the
lower-level page plumbing, walking the page tree and tracking the graphics state, is
standard and we pass over it.

\paragraph{Reading coordinates without losing bits.}
The decisive choice is to parse the vector container directly, never through a
conversion. We read the page content stream and follow the graphics state through
nested form XObjects, collecting the operands of the marker- and line-drawing
operators as written, at the backend's full $d$ fractional digits
(\cref{eq:quantum}). This is what preserves the storage floor of
\cref{sec:precision}: as we quantify in \cref{sec:renderers}, routing through
PDF$\to$SVG (a common shortcut) re-quantizes the coordinates and discards roughly
ten to fourteen bits, collapsing bit-exact recovery into mere digitization.

\paragraph{Reconstructing series.}
Because backends emit a series as an absolute first point followed by incremental
offsets, we segment the stream into per-series blocks and reconstruct positions by a
cumulative sum within each block. Getting this segmentation right is necessary for
correctness, since an off-by-one in the accumulation corrupts an entire series, but
it introduces no error of its own: the arithmetic is exact in the stored precision.

\paragraph{Calibrating the axes.}
The affine map is recovered from the ticks the renderer itself drew, with no axis
limits or metadata assumed. We segment the page into panels from the drawn frames,
collect the text spans below and to the left of each panel as candidate tick labels,
and parse them into values, recombining a base ``10'' with its raised exponent into
$10^{k}$, expanding SI suffixes such as ``1B'' or ``10T'', and reading scientific
notation directly. From the resulting (position, value) pairs we fit a linear and a
logarithmic map and keep the one with the smaller residual, which also serves as a
quality gate: a panel that does not fit cleanly is reported as uncalibrated rather
than guessed. \Cref{app:calibration} gives the full procedure and identifies
tick-label rounding as the dominant source of the calibration error $\varepsilon$ of
\cref{eq:ceiling}.

\paragraph{Inverting the quantization.}
With coordinates in hand and the affine map recovered, we invert \cref{eq:affine}.
When the backend resolves the \float{32} grid (\cref{sec:renderers}) and calibration
is tight, we additionally snap to the unique \float{32} pre-image, recovering the
exact value the author held in memory; when it is not, we report the digitized value
with the precision \cref{eq:ceiling} allows. Calibration, the practical bottleneck
(\cref{sec:precision,app:calibration}), contributes only the $\varepsilon$ term
already accounted for.

\paragraph{Certifying by re-rendering.}
Recovery alone is not a proof. We therefore re-render the recovered data with the
same library and compare the result against the original at the byte level, but only
over the elements that carry data: the marker placements, the polyline vertices, and
the axis-tick geometry, matched operator for operator. The comparison deliberately
excludes everything that encodes no plotted value, the text of axis labels, titles,
and annotations, the embedded fonts, and the document metadata, since two faithful
renders can differ in these without differing in the data, and folding them in would
only weaken the test. Agreement on the data-bearing operators is the certificate: by
the coincidence bound \cref{eq:coin} it cannot arise unless the recovered data are
the plotted data, so the result is non-repudiable. The same comparison makes
tampering evident: an edit to the plotted geometry after the original render cannot
reproduce those operators without the original data, closing the loop between what we
decoded and what the document contains. The one assumption is that the original's
renderer can be reproduced; matplotlib's content stream is stable across releases, so
in practice re-rendering matches without knowing the exact version.

\paragraph{Cost.}
Every step is linear in the number of drawn primitives and involves no human
interaction, so the verifier runs unattended over large collections; this is the
automation that answers deficiency~D2 of \cref{sec:intro}.

\section{Results}
\label{sec:results}

The strongest test of a decoder is to decode a published figure whose underlying
data exist in an independent public archive, without using that archive during
recovery, and then compare. We do this in the two regimes predicted by
\cref{sec:precision}: markers, the machine-precision regime, and polylines, the
calibration-limited regime. In both, the recovered data match the reference, and
the few residual discrepancies are explained by the theory rather than signalling
a failure of the method. Both flagship figures are matplotlib PDFs, the one backend
that resolves the \float{32} grid (\cref{sec:renderers}); on other renderers the
machine-precision regime is unavailable and recovery falls to the lower,
calibration-limited ceilings mapped there, so the bit-exact result below is
matplotlib-specific while the polyline result is representative of any vector
backend.

\paragraph{Markers: the machine-precision regime.}
The Planck~2018 temperature power spectrum \citep{planck2018} is published as both
a figure and a data release. Decoding the figure's markers and calibrating from
its ticks reproduces the released multipole and power pairs to $\sim\!10^{-9}$,
the \float{32} floor anticipated in \cref{sec:precision}. No Planck data entered
the pipeline. Against figures we generate ourselves, where the ground truth is
known exactly, marker recovery is bit-exact: the recovered \float{32} values equal
the inputs on every bit, consistent with \cref{thm:ceiling} and with the
injectivity of \cref{thm:inj}.

\paragraph{Polylines: the calibration-limited regime.}
The Keeling \chem{CO_2} curve, as plotted in a recent preprint, decodes to the
NOAA Mauna Loa monthly record \citep{noaa_co2} to $\sim\!5\times10^{-4}$ of range.
This is not a shortfall of the figure but the calibration term of
\cref{eq:ceiling} made visible: a smooth polyline offers fewer independent
constraints on the affine map than a field of discrete markers, so the calibration
error is larger. The residual is consistent with tick-calibration error and with
nothing in the stored curve.

\paragraph{Explaining the exceptions.}
Where individual points disagree with the reference, the cause is always one of the
mechanisms the theory names, never an unexplained failure: values inside the dyadic
interval $[-x^\dagger,x^\dagger]$ of \cref{thm:inj}, where neighbouring floats are
genuinely unresolvable; calibration error on figures with sparse or ambiguous
ticks, bounded by \cref{eq:ceiling}; and points occluded by overplotting, which the
parser flags rather than guesses. Each is predicted, each is detectable, and each
shrinks under better calibration, so the agreement with ground truth is not only
close but accountable. Beyond these two flagship cases, a larger audit of recovered figures against their
deposited source data, across several renderers, gives a median relative error of
about $0.09\%$, in the calibration-limited regime; the MATLAB and OriginLab rows of
\cref{tab:renderers} are two figures from this audit, while that table's top block
reports separate controlled round-trips that isolate each renderer's storage floor.

\section{Renderer and format coverage}
\label{sec:renderers}

The storage term of \cref{eq:ceiling} is a property of the exporter, not of the
plotting library's intent, so the right unit of analysis is the pair
(renderer $\times$ export format). \Cref{tab:renderers} characterizes coverage on a
single bit scale: for each pair it reports the storage precision and a measured
round-trip error.

We read the table two ways. The extractable precision
$B_{\text{extract}}=\log_2(\text{axis span}/\Delta_{\mathrm s})$ is the recoverable
significand at full scale, that is, how accurately data can be liberated. Forensics
needs more, in two grades. Recovering the exact stored \float{32} value requires only
$B_{\text{extract}}>24$, the significand depth, a bar SVG clears and PGFPlots reaches.
Provenance fingerprinting is stronger: it requires enough headroom above $24$ bits for
the recovered data to re-render to the original page bytes, and only matplotlib PDF,
at about $42$ bits, has it.

\begin{table}[t]
  \centering
  \small
  \setlength{\tabcolsep}{8.5pt}
  \caption{Precision by renderer and export format. $B_{\text{extract}}$ is the
  significand bits recoverable at full scale (defined above); ``Resolves
  \float{32}?'' asks whether that exceeds the $24$-bit \float{32} grid, the
  condition for recovering the exact stored value; the ``(provenance)'' tag marks the
  stronger case where the headroom above $24$ bits also fingerprints the render. The
  last column is a measured round-trip error: the median relative error between
  recovered and true values. Top block: a controlled round-trip (render the same
  known dataset, re-extract, fit the affine map, take the residual), which exposes
  each renderer's storage floor and confirms the $B_{\text{extract}}$ ceiling (the
  error tracks $2^{-B_{\text{extract}}}$; matplotlib PDF recovers \float{32}
  bit-for-bit). Bottom block: renderers we could not run locally, measured instead
  on a real published figure with deposited data, where calibration rather than
  storage sets the floor (\cref{sec:precision}) and the error is sub-percent.
  EPS/PS is Encapsulated PostScript; matplotlib writes only three marker decimals
  there, so its PostScript output falls below the \float{32} grid.}
  \label{tab:renderers}
  \begin{tabular}{llccl}
    \toprule
    Renderer & Format & $B_{\text{extract}}$ & Resolves \float{32}? & Round-trip error \\
    \midrule
    matplotlib            & PDF    & $\sim$42   & yes (provenance) & $3{\times}10^{-13}$ (bit-exact) \\
    matplotlib            & SVG    & $\sim$32   & yes              & $8{\times}10^{-10}$ \\
    matplotlib            & EPS/PS & $\sim$19   & no               & $9{\times}10^{-7}$ \\
    PGFPlots / TikZ       & PDF    & $\sim$25   & borderline       & $4{\times}10^{-8}$ \\
    R \texttt{cairo\_pdf} & PDF    & $\sim$18   & no               & $3{\times}10^{-6}$ \\
    R \texttt{pdf()}      & PDF    & $\sim$14.5 & no               & $7{\times}10^{-6}$ \\
    \midrule
    \multicolumn{5}{l}{Measured on a real published figure (calibration-limited; not run locally):} \\
    MATLAB     & \multicolumn{3}{c}{from a deposited-data figure} & $0.12\%$ \\
    OriginLab  & \multicolumn{3}{c}{from a deposited-data figure} & $0.25\%$ \\
    \bottomrule
  \end{tabular}
\end{table}

\paragraph{What the table says.}
Three points stand out. First, format matters as much as renderer: the same
matplotlib data recovers \float{32} bit-for-bit through the PDF backend (ten marker
decimals) and to nine digits through SVG (six decimals), but collapses to three or
four significant figures through the PostScript backend (three decimals), below the
\float{32} grid. Second, only matplotlib PDF pairs full \float{32} recovery with the
extra headroom (about $42$ bits) that fingerprints provenance; SVG and PGFPlots
reach \float{32}-grade data recovery but not provenance. Third, the controlled
round-trips track $2^{-B_{\text{extract}}}$ and so confirm the storage ceilings,
whereas real published figures round-trip only to sub-percent accuracy because there
calibration, not storage, sets the floor, exactly as \cref{sec:precision} predicts;
the median over our ground-truth audit is about $0.09\%$.

\paragraph{Beyond the table.}
Two pipeline facts sit alongside these renderers. Converting a PDF to SVG with
\texttt{pdf2svg} re-quantizes matplotlib's markers to about $28$ bits, which is the
route behind the manual reanalysis of \citet{hamlin2022} and why \tool{} parses PDF
directly (\cref{sec:impl}); and Plotly writes seven decimals but its Skia backend
caps effective precision near \float{32}. Our extractor also handles ROOT, gnuplot,
xmgrace, Mathematica, IgorPro, and Adobe Illustrator figures at the
digitization-to-\float{32} grades, validated by round-trip. For the low-precision
devices the collision interval $[-x^\dagger,x^\dagger]$ of \cref{thm:inj} covers
the whole axis, so they cannot fingerprint, but four to five significant figures is
still ample for data liberation and for the coincidence-based forensics of
\cref{thm:coin}. The honest boundary is that provenance is matplotlib-PDF
territory, while data liberation spans the whole ecosystem.

\section{Application}
\label{sec:application}

A safe way to show what the method is for is to revisit a question that is already
public, already in the literature, and entirely about a number rather than about
conduct: how wide the confidence interval on the Chinchilla compute-optimal exponent
should be. \citet{besiroglu2024chinchilla} reanalysed the scaling-law fit of
\citet{hoffmann2022chinchilla} and argued that the reported interval on the exponent
$a=\beta/(\alpha+\beta)$ was implausibly tight, a point the original authors
acknowledged. We take that published correction as our starting point and ask a
narrow methodological question: can it be confirmed directly from the figure, at the
renderer's precision, without re-running anyone's experiments?

It can. The per-run points behind the relevant isoFLOP figure were not released as a
table, so the figure is their only public record. We decode $115$ markers from the
vector figure, calibrate from its logarithmic parameter axis and linear loss axis
(\cref{app:calibration}), and refit the reported model
$L=E+A/N^{\alpha}+B/D^{\beta}$ (\cref{fig:chinchilla}a). Recovering all $115$ points
is a single automatic pass over the figure; tracing that many overlapping markers by
hand at comparable precision would be slow and would itself inject the digitization
noise our reading is meant to exclude. Bootstrapping on the
authors' own protocol gives a confidence interval on $a$ of width about $0.045$,
more than forty times the reported $\approx 0.001$ and in close agreement with the
reanalysis of \citet{besiroglu2024chinchilla} (\cref{fig:chinchilla}b). Because we
read the figure at its storage precision rather than by hand, our result removes
digitization noise as a possible objection to that reanalysis: the wide interval is
present in the authors' own plotted points, not an artefact of imprecise tracing.

Two points keep this in proportion. First, the scaling law's headline finding, that
parameters and training data should scale together with $a\approx 0.5$, is untouched:
our central estimate $a\approx 0.50$ agrees with it, and only the stated precision of
one exponent is at issue. Second, the narrow interval has an innocent and
reproducible explanation, which we record as an observation and not a charge: it
reappears when each bootstrap resample is warm-started from the single converged fit
instead of refit independently, a common shortcut. We present the case in this form,
as the independent confirmation of an existing, acknowledged correction, because that
is the method's intended use: liberating the data of record so that published
conclusions can be checked, combined, and, where warranted, refined, with every
recovered value offered as a candidate for review (\cref{sec:ethics}).

\begin{figure}[t]
  \centering
  \includegraphics[width=\linewidth]{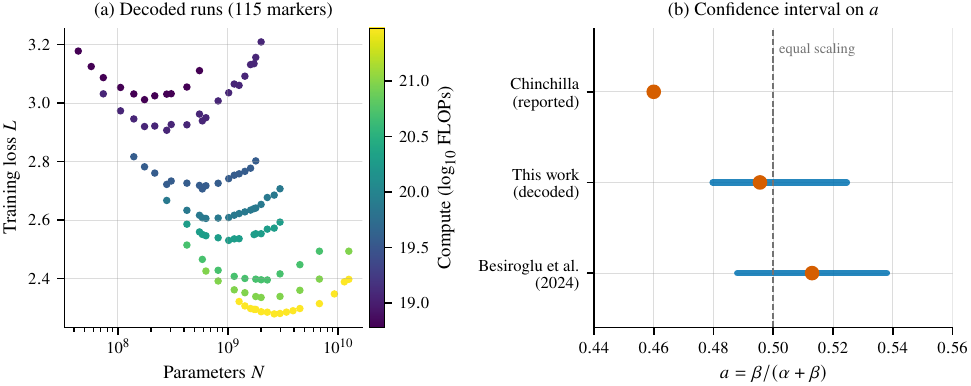}
  \caption{Confirming a published correction from the figure alone. (a) The training
  runs we recover from the vector figure of \citet{hoffmann2022chinchilla}: $115$
  markers, each a model's final training loss $L$ against its parameter count $N$,
  coloured by training compute in FLOPs. At a fixed compute budget the runs form one
  isoFLOP profile, whose envelope fixes the compute-optimal model size; we plot our
  recovered data, not the published image. (b) The exponent $a=\beta/(\alpha+\beta)$
  that sets how optimal model size scales with compute, where $a\approx 0.5$ means
  parameters and training data should grow together. Bars are confidence intervals
  and dots are point estimates. ``Chinchilla (reported)'' is the original parametric
  fit (their Approach~3); its interval (width $\approx 0.001$) is a near-point,
  whereas our independent-refit bootstrap of the decoded data and the reanalysis of
  \citet{besiroglu2024chinchilla} both give intervals more than forty times wider and
  mutually consistent around $a\approx 0.5$. Our role is to reproduce that published
  correction directly from the figure.}
  \label{fig:chinchilla}
\end{figure}

\section{Responsible use}
\label{sec:ethics}

Because this work recovers data that authors did not explicitly release, and
because its forensic mode can surface apparent inconsistencies, we adopt explicit
safeguards.

\begin{itemize}
  \item \textbf{Candidates, not verdicts.} Every output, whether a recovered
        dataset, a cross-figure match, or a failed re-render certificate, is a
        candidate for human review, accompanied by its uncertainty
        (\cref{sec:precision}) and its coincidence bound (\cref{sec:collision}).
        The system makes no accusation, and we name no individual or paper as
        fraudulent. The one named figure we analyse in depth
        (\cref{sec:application}) concerns a published, already-adjudicated
        numerical dispute, not conduct.
  \item \textbf{Read-only and reproducible.} All external access is read-only; we
        take no outward or irreversible action. The verifier is deterministic, so
        any reported result can be independently re-derived from the same input.
  \item \textbf{Provenance, not surveillance.} We decline privacy-sensitive uses.
        Where a capability could be misused, for instance de-anonymizing an author
        through render fingerprints, we evaluate it only on synthetic data and
        report it as a measured limit, not as a tool.
  \item \textbf{Corrections through normal channels.} A recovered correction
        (\cref{sec:application}) is a scientific contribution to be raised through
        comment, contact with authors, or reanalysis, not a public allegation.
\end{itemize}

\paragraph{Distribution and copyright.}
We release the verifier and a metadata-only resource index. We do not redistribute
any third-party figure or PDF; published figures remain the property of their
authors and publishers, and our artifacts reference them by identifier rather than
reproducing them. Self-generated figures used for ground-truth tests are released
in full, since we own them.

\section{Conclusion}
\label{sec:conclusion}

Data-bearing vector figures are a high-bit container that the literature has been
treating as a picture. We gave a precision theory that bounds how much of their
contents can be recovered (bit-exact \float{32} for \texttt{matplotlib} markers,
calibration-limited three-to-four significant figures end-to-end), and a
no-collision theory under which recovery is injective away from a characterized,
vanishing interval near zero and an apparent match is evidence rather than
coincidence. An automatic verifier realizes both, recovering data with no human in
the loop and certifying the result by re-rendering its markers, lines, and ticks to the original. With no
ground truth in the loop the decoder reproduces external archives to
$\sim\!10^{-9}$ (markers) and $\sim\!5\times10^{-4}$ (polylines), and on a single
famous figure it independently reproduces a published correction to a scaling-law
confidence interval. A map across renderers and export formats delimits where each
guarantee holds.

Together these turn the three deficiencies of hand-digitization (low precision,
low automation, no verification) into provable, automatic, certifiable
properties, for the class of figures that carry their data in vector form. Our aim is to make figure-encoded data a first-class, checkable part of the
scientific and technical record.

\section*{Code and data availability}
The verifier \tool, together with its unit tests and self-contained fixtures,
the renderer-precision measurements, and a metadata-only index of the figures
analysed, is available under the MIT license at
\url{https://github.com/Bowen-Sun-0728/figverify}. It installs as a command-line
tool (\texttt{classify}, \texttt{extract}, \texttt{verify}) and reproduces the
experiments reported here. No third-party figures or PDFs are redistributed
(\cref{sec:ethics}).

\bibliographystyle{unsrtnat}
\bibliography{refs}

\appendix
\section{Proof of the extraction-precision ceiling}
\label{app:precision}

We prove \cref{thm:ceiling}, in the setup fixed there: the stored coordinate is
$\tilde u=u+\eta$ with $\lvert\eta\rvert\le\tfrac12\Delta_{\mathrm s}$ on the grid
\cref{eq:quantum}; the map recovered from the ticks is $\hat a=a(1+\delta_a)$ and
$\hat b=b+\Delta_b$ with $\lvert\delta_a\rvert\le\varepsilon$ and
$\lvert\Delta_b\rvert\le\varepsilon\lvert b\rvert$; and recovery inverts it as
$\hat v=(\tilde u-\hat b)/\hat a$.

\begin{proof}
Substitute $\tilde u = av+b+\eta$ and the calibration estimates:
\[
  \hat v
  = \frac{av+b+\eta-(b+\Delta_b)}{a(1+\delta_a)}
  = \frac{a v+\eta-\Delta_b}{a(1+\delta_a)}
  = \Bigl(v+\tfrac{\eta}{a}-\tfrac{\Delta_b}{a}\Bigr)\bigl(1+\delta_a\bigr)^{-1}.
\]
Expanding $(1+\delta_a)^{-1}=1-\delta_a+O(\delta_a^2)$ and discarding products of
two small quantities ($\eta\delta_a$ and $\Delta_b\delta_a$ are $O(\varepsilon^2)$
because $\eta$ is already at the storage scale),
\[
  \hat v - v
  = -v\,\delta_a + \frac{\eta}{a} - \frac{\Delta_b}{a} + O(\varepsilon^2).
\]
Taking absolute values and using the triangle inequality with the bounds on
$\eta$, $\delta_a$, and $\Delta_b$,
\[
  \lvert \hat v - v\rvert
  \le \frac{\lvert\eta\rvert}{\lvert a\rvert}
     + \lvert v\rvert\,\lvert\delta_a\rvert
     + \frac{\lvert\Delta_b\rvert}{\lvert a\rvert}
  \le \frac{\Delta_{\mathrm s}}{2\lvert a\rvert}
     + \varepsilon\lvert v\rvert
     + \varepsilon\Bigl\lvert\frac{b}{a}\Bigr\rvert
     + O(\varepsilon^2),
\]
which is the claim. \end{proof}

\begin{corollary}[Bit-exact recovery]
\label{cor:bitexact}
Suppose calibration is exact ($\varepsilon=0$) and the data are \float{32} numbers.
If the storage step is below half the floating-point ULP across the plotted range,
$\tfrac12\Delta_{\mathrm s}/\lvert a\rvert < \tfrac12\,\mathrm{ulp}(v)$ for all
plotted $v$, then $\hat v=v$ bit-for-bit after rounding to \float{32}.
\end{corollary}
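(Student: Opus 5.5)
With $\varepsilon=0$ the argument needs no asymptotics, so I will not start from \cref{eq:ceiling} with its $O(\varepsilon^2)$ remainder. Instead I will rerun the algebra of the proof of \cref{thm:ceiling} exactly. With $\hat a=a$ and $\hat b=b$ (that is, $\delta_a=\Delta_b=0$) and $\tilde u=av+b+\eta$, the substitution gives the identity
\[
  \hat v - v = \frac{\eta}{a},
  \qquad\text{hence}\qquad
  \lvert \hat v - v\rvert \le \frac{\Delta_{\mathrm s}}{2\lvert a\rvert},
\]
with no discarded terms. Combining this with the hypothesis $\tfrac12\Delta_{\mathrm s}/\lvert a\rvert<\tfrac12\,\mathrm{ulp}(v)$ yields the strict inequality $\lvert\hat v-v\rvert<\tfrac12\,\mathrm{ulp}(v)$ for every plotted $v$.

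The second step is the rounding argument. Let $\mathrm{fl}_{32}$ denote round-to-nearest onto the \float{32} grid \citep{ieee754}. Since $v$ is itself a grid point, $\mathrm{fl}_{32}(\hat v)=v$ holds exactly when $\hat v$ lies strictly inside the rounding cell of $v$. That cell is the open interval bounded by the midpoints between $v$ and its two neighbours. Because the inequality is strict, ties cannot occur, and the tie-breaking rule never enters the argument.

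The step I expect to be the real obstacle is the binade boundary. In the interior of a binade both neighbours of $v$ sit at distance $\mathrm{ulp}(v)$, so the cell has half-width $\tfrac12\,\mathrm{ulp}(v)$ on each side and we are done. When $\lvert v\rvert=2^k$ exactly, however, the neighbour on the side nearer zero sits at distance only $\tfrac12\,\mathrm{ulp}(v)$, where $\mathrm{ulp}(v)=2^{k-23}$ as in \cref{thm:inj}. The cell on that side then has half-width only $\tfrac14\,\mathrm{ulp}(v)$. Read literally with that definition of $\mathrm{ulp}$, the hypothesis is therefore a factor of two too weak at powers of two. I would first repair this by reading $\mathrm{ulp}(v)$ as the gap to the nearer neighbour, which is the convention under which ``half an ulp'' is exactly the rounding radius. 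Failing that, I would state the corollary with the hypothesis required for $\min(\lvert v\rvert,\lvert v\rvert^-)$, where $\lvert v\rvert^-$ is the predecessor float, and note that this costs at most one bit of the headroom in \cref{tab:renderers}, which for matplotlib PDF is abundant. Subnormals and $v=0$ fall inside $[-x^\dagger,x^\dagger]$ of \cref{thm:inj} and would be excluded in the same way.

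Finally, I will remark that the inversion $(\tilde u-b)/a$ must be evaluated in higher precision, such as \texttt{Float64} or exact decimal arithmetic. Its own rounding error, of relative size about $2^{-53}$, is far below $\tfrac14\,\mathrm{ulp}$ and so does not disturb the strict inequality. Under these conditions $\mathrm{fl}_{32}(\hat v)=v$, which is the claimed bit-for-bit equality.
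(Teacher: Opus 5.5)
Your argument follows the paper's route: set $\varepsilon=0$ in \cref{thm:ceiling} to get $\lvert\hat v-v\rvert\le\Delta_{\mathrm s}/(2\lvert a\rvert)<\tfrac12\,\mathrm{ulp}(v)$, then conclude that round-to-nearest returns $v$. Where you differ, you are more careful than the paper. Rederiving the exact identity $\hat v-v=\eta/a$ avoids passing through the asymptotic statement of \cref{thm:ceiling}; that is harmless at $\varepsilon=0$ but cleaner.

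More importantly, your binade-boundary objection is correct, and it exposes a real gap in the paper's one-line step ``$v$ is the unique \texttt{Float32} number within half an ULP of $\hat v$''. Take $\mathrm{ulp}(v)=2^{\lfloor\log_2\lvert v\rvert\rfloor-23}$ as defined in \cref{thm:inj}. The neighbour of $v=2^k$ toward zero is $2^k-2^{k-24}$, only $\tfrac12\,\mathrm{ulp}(v)$ away. So $\hat v=2^k-0.7\cdot2^{k-24}$ satisfies the paper's bound, yet it rounds to that predecessor. The paper never treats this case, and your repair (nearer-neighbour ulp, equivalently imposing the hypothesis at the predecessor) is the right one.

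You can sharpen the observation at no cost. Because the hypothesis is imposed for all plotted $v$, and $\mathrm{ulp}$ is nondecreasing in $\lvert v\rvert$, it is equivalent to $\Delta_{\mathrm s}/\lvert a\rvert<\mathrm{ulp}(v_{\min})$, where $v_{\min}$ is the plotted value of least magnitude. Suppose some plotted $w$ has $\lvert w\rvert<2^k$. Then already $\Delta_{\mathrm s}/\lvert a\rvert<\mathrm{ulp}(w)\le 2^{k-24}$, which is exactly what the lower half-cell of width $2^{k-25}$ needs. So the corollary as written fails only when the smallest-magnitude datum is itself a power of two, and your strengthening changes nothing elsewhere.

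One small caution on your closing remark about evaluating in \texttt{Float64}. It is not enough that the evaluation error is small relative to $\tfrac14\,\mathrm{ulp}$, because a strict inequality with unquantified slack can be broken by any perturbation. You need either exact (rational) evaluation, or an explicit margin between $\Delta_{\mathrm s}/(2\lvert a\rvert)$ and the rounding half-width that exceeds the evaluation error. In practice the storage headroom of \cref{tab:renderers} supplies that margin. The paper treats $\hat v$ as a real number and does not raise this point at all.
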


\begin{proof}
By \cref{thm:ceiling} with $\varepsilon=0$, $\lvert\hat v-v\rvert\le
\Delta_{\mathrm s}/(2\lvert a\rvert) < \tfrac12\,\mathrm{ulp}(v)$, so $v$ is the
unique \float{32} number within half an ULP of $\hat v$; rounding $\hat v$ to
\float{32} returns $v$. \end{proof}

For \texttt{matplotlib}'s PDF backend, $\Delta_{\mathrm s}=10^{-10}$ in the
transformed unit gives $B_{\text{extract}}=\log_2(\text{span}/\Delta_{\mathrm s})
\approx 42$ bits, comfortably above the $24$-bit \float{32} significand everywhere
except the vanishing near-zero interval of \cref{thm:inj}, so the hypothesis of
\cref{cor:bitexact} holds away from that interval and marker recovery is bit-exact
when calibration is tight (\cref{sec:results}).

\section{Proofs of the no-collision properties}
\label{app:collision}

\subsection{Injectivity and the collision interval}

\begin{proof}[Proof of \cref{thm:inj}]
Two data values $v_1\ne v_2$ are stored as $\tilde u_i=\Delta_{\mathrm s}
\operatorname{round}(u_i/\Delta_{\mathrm s})$ with $u_i=av_i+b$. If
$\lvert u_1-u_2\rvert=\lvert a\rvert\,\lvert v_1-v_2\rvert>\Delta_{\mathrm s}$, then
$u_1$ and $u_2$ cannot round to the same grid value, so $\tilde u_1\ne\tilde u_2$
and recovery distinguishes them; conversely if the device separation is at most
$\Delta_{\mathrm s}$ they may share a grid value. Hence recovery is injective on
data whose consecutive separations exceed $\Delta_{\mathrm s}/\lvert a\rvert$.

Now let the data be \float{32} numbers. The representable grid near magnitude
$\lvert v\rvert$ has spacing $\mathrm{ulp}(v)=2^{\lfloor\log_2\lvert v\rvert\rfloor-23}$.
Neighbouring representable values are individually resolvable iff their device
separation exceeds the storage quantum,
\[
  \lvert a\rvert\,\mathrm{ulp}(v) > \Delta_{\mathrm s}
  \iff
  2^{\lfloor\log_2\lvert v\rvert\rfloor-23} > \frac{\Delta_{\mathrm s}}{\lvert a\rvert}
  \;=:\;\delta .
\]
Writing $\lfloor\log_2\lvert v\rvert\rfloor > 23+\log_2\delta$ and using
$\lvert v\rvert\ge 2^{\lfloor\log_2\lvert v\rvert\rfloor}$ gives the threshold
\[
  \lvert v\rvert > x^\dagger,
  \qquad
  x^\dagger = 2^{\lfloor\log_2(2\delta)\rfloor + 24}
  \;\approx\; \delta\,2^{24}
  = \frac{\Delta_{\mathrm s}}{\lvert a\rvert}\,2^{24},
\]
where the factor $2$ inside the floor accounts for round-to-nearest. Thus the
non-resolvable set is the dyadic interval $[-x^\dagger,x^\dagger]$, establishing
\cref{eq:xdagger}. \end{proof}

\paragraph{Numerical size.}
With $\delta=\Delta_{\mathrm s}/\lvert a\rvert$ expressed as a fraction of the axis
range, $x^\dagger\approx\delta\,2^{24}$. For \texttt{matplotlib} PDF, $\delta$ is
small enough that $x^\dagger\approx 4\times10^{-6}$ of the range; for R's PDF
devices $\delta$ is large enough that $x^\dagger$ exceeds the range, so no datum is
resolvable at the \float{32} level, consistent with \cref{tab:renderers}.

\subsection{The coincidence bound}

\begin{proof}[Proof of \cref{thm:coin}]
Model each effective coordinate, after the entropy gate of
\cref{sec:collision} has removed degenerate, low-cardinality, or structured
values, as a draw from a distribution that is, to an adversary trying to produce a
chance match, no more concentrated than uniform over its $2^{b}$ resolvable cells.
For two independent draws, the probability of landing in the same cell (within the
matching tolerance of one cell) is at most $2^{-b}$. A point has two coordinates,
assumed independent, so a single point matches with probability at most
$2^{-2b}$. The $n$ points are independent, so
\[
  \Pr[\text{all } n \text{ points match}] \;\le\; \bigl(2^{-2b}\bigr)^{n}
  \;=\; 2^{-2nb}.
\]
\end{proof}

\paragraph{Remarks.}
(i) The bound rests on the entropy gate, which fixes $b$ as a coordinate's honest
effective entropy by excluding degenerate, low-cardinality, and structured sequences.
What remains places at most $2^{-b}$ of its mass on any one resolvable cell
(min-entropy at least $b$), and that is what caps the chance-collision probability at
$2^{-b}$. Concentration the gate fails to remove would raise the chance of an
accidental match, not lower it, so the gate is deliberately conservative; genuinely
shared data match regardless. (ii) The exponent scales linearly in both the
number of points $n$ and the per-coordinate bits $b$, so even modest figures give
overwhelming evidence: $n=20$, $b=10$ yields $2^{-400}$. (iii) For a re-render
certificate (\cref{sec:impl}) the ``match'' is over every data-bearing primitive
simultaneously, every marker, vertex, and tick, so $n$ is the full point count and
the bound is far below any machine epsilon, which is why agreement on those
primitives is treated as proof of faithful recovery. (iv) The exponent counts two
independent coordinates per point, $2b$ bits; for a functional relation $y=f(x)$ a
point carries closer to $b$, so the honest exponent is $nb$, still far below machine
epsilon for any non-trivial figure.

\section{Recovering the affine map from ticks}
\label{app:calibration}

The calibration step of \cref{thm:ceiling} recovers the affine map $(a,b)$ of
\cref{eq:affine} using only the axis ticks the renderer itself drew, with no axis
limits, data range, or metadata assumed. Because this step, not the storage
quantum, sets the precision actually achieved (\cref{sec:precision}), we give it in
full. It has four stages.

\paragraph{Panels.}
A single content stream may hold several subplots, so we first segment the page
into panels. We take the rectangles the renderer drew for the axis frames, keep
those whose area is a sizable but not page-filling fraction of the page, and
de-overlap them greedily, rejecting a candidate that overlaps an accepted panel by
more than a set fraction. Each marker is later assigned to the panel whose frame
contains it, and every panel is calibrated on its own.

\paragraph{Tick labels.}
For each panel we collect the text spans the renderer placed just below the bottom
edge as candidate $x$ labels and just left of the frame as candidate $y$ labels,
each tagged with the device coordinate of its bounding box. Turning a label into a
value is where a naive parser fails, and three cases recur in scientific figures. A
logarithmic label is drawn as a base ``10'' with a raised, smaller exponent; we
recombine the pair into $10^{k}$, but only when the exponent span is set smaller and
lifted above the baseline, so that two ordinary neighbouring ticks such as ``10''
and ``15'' are never fused into a spurious $10^{15}$. Engineering labels such as
``100M'', ``1B'', or ``10T'' are expanded through their SI suffixes, and scientific
text such as ``6e18'' is read directly. We normalise the Unicode minus sign, the
multiplication sign, and thousands separators before parsing.

\paragraph{Linear or logarithmic fit.}
From the cleaned (position, value) pairs of one axis we fit two models by least
squares, a linear map $v=p+qu$ and a logarithmic map $\log_{10}v=p+qu$ (the scale and
offset $(a,b)$ of \cref{eq:affine} follow by inversion), and keep whichever has the
smaller residual, measured in device points. The residual doubles
as a quality gate: a panel whose best fit exceeds a few points, or that yields fewer
than two parseable ticks on an axis, is reported as uncalibrated rather than
guessed. The chosen map is inverted to decode each marker (\cref{sec:impl}); a
logarithmic axis decodes as $10^{p+qu}$.

\paragraph{Where the calibration error comes from.}
With the fit in hand, the relative error $\varepsilon$ of \cref{eq:ceiling} is set
not by geometry but by the tick labels. A printed label is a rounded display of the
true tick value: an axis annotated ``0.5'' fixes that tick only to the precision the
author chose to print, and the affine map inherits that uncertainty. The device
positions of the ticks are themselves stored coordinates, quantised at
$\Delta_{\mathrm s}$, which is negligible beside label rounding, while the fit
residual across several ticks is the third and usually smallest contribution. This
is why \cref{sec:precision} finds calibration, not storage, to be the bottleneck,
and why anything that adds independent constraints, more ticks, a declared axis
transform, or author-supplied limits, drives the achievable precision back toward
the storage floor. When the plotted data are \float{32} and the axis carries enough
ticks, $\varepsilon$ falls below the floating-point ULP and recovery is bit-exact
(\cref{cor:bitexact}).

\end{document}